\newtheorem{theorem}{Theorem}[section]
\newtheorem{lemma}[theorem]{lemma}
\newtheorem{proposition}[theorem]{Proposition}
\newtheorem{remark}[theorem]{Remark}
\newtheorem{assumption}[theorem]{Assumption}
\begin{document}
\title{\vspace{0.25in}
Collaborative Indirect Influencing and Control on Graphs using Graph
Neural Networks}
\author{Max L. Gardenswartz, Brandon C. Fallin, Cristian F. Nino, and Warren
E. Dixon \thanks{Max L. Gardenswartz, Brandon C. Fallin, Cristian F. Nino, and Warren
E. Dixon are with the Department of Mechanical and Aerospace Engineering,
University of Florida, Gainesville, FL 32611, USA. Email: \{mgardenswartz,
brandonfallin, cristian1928, wdixon\}@ufl.edu.}\thanks{This research is supported in part by AFOSR grant FA8651-24-1-0018.
Any opinions, findings, and conclusions or recommendations expressed
in this material are those of the author(s) and do not necessarily
reflect the views of the sponsoring agency.}}
\maketitle
\thispagestyle{empty}
\begin{abstract}
This paper presents a novel approach to solving the indirect influence
problemma in networked systems, in which cooperative nodes must regulate
a target node with uncertain dynamics to follow a desired trajectory.
We leverage the message-passing structure of a graph neural network
(GNN), allowing nodes to collectively learn the unknown target dynamics
in real time. We develop a novel GNN-based backstepping control strategy
with formal stability guarantees derived from a Lyapunov-based analysis.
Numerical simulations are included to demonstrate the performance
of the developed controller.
\end{abstract}

\begin{IEEEkeywords}
Nonlinear systems, adaptive control, neural networks 
\end{IEEEkeywords}

\section{Introduction}

Networks of interconnected agents operate across a variety of domains
including social media, financial markets, traffic systems, and crowd
management. The flow of information and actions in these networks
is formalized through graph theory, where individual actors are represented
as nodes and connections as edges. This representation of a system
enables analysis of how individual nodes' actions propagate through
the network. In scenarios like guiding wildlife populations or regulating
opinion dynamics in virtual networks, individual nodes must leverage
inter-agent interactions to accomplish network-wide goals \cite{King.Portugal.ea2023,Bazzan.Klugl2009}.
We refer to these scenarios as the indirect influence problem.

Various methodologies have been employed to address the indirect influence
problem. Model-based implicit control strategies have enabled coordinated
influence of uncooperative targets with heterogeneous nonlinear dynamics
\cite{Sebastian.Montijano.ea2022} and \emph{switched} systems analysis
has facilitated the development of indirect influence strategies based
on system characteristics of each uncooperative target \cite{Licitra.Bell.ea2019}.
Additionally, containment control strategies have permitted influencing
agents to guide follower agents away from obstacles \cite{Cao2009}.

Existing approaches to the indirect influence problem face a variety
of limitations. Model-based strategies struggle to meet control objectives
with uncertain or unstructured dynamics. Methods relying on individual
leaders face scalability challenges when confronted with numerous
targets, or when targets exhibit high mobility relative to the leaders.
Further, when influence must be coordinated across multiple actors
to regulate a target's state, individual learning fails to capitalize
on shared knowledge. Based on these drawbacks, learning-based methods
with distributed information sharing are motivated for addressing
the indirect influence problem.

Neural networks are powerful tools for approximating continuous functions
over compact domains \cite{Stone1948,Hornik1991,Kidger.Lyons2020}.
Recent work has extended Lyapunov-based adaptive control methods from
shallow neural networks to DNNs, which have been shown to provide
improved function approximation performance with fewer parameters
\cite{Goodfellow.Bengio.ea2016,Rolnick.Tegmark2018,LeCun2015}. Notably,
in \cite{Nino.Patil.ea2023}, a single leader agent uses a Lyapunov-based
DNN (Lb-DNN) to learn unknown inter-agent interaction dynamics in
real time, allowing for the leader to regulate each uncontrolled agent
to within a ball of the desired state. This approach has been extended
to address target tracking with partial state feedback, enabling agents
to reconstruct target states from limited measurements while simultaneously
learning unknown target dynamics, and distributed adaptation approaches
have further enhanced these methods by enabling coordinated learning
across multiple agents \cite{Nino.Patil.ea2024}. Lb-DNNs have also
been leveraged with approximate dynamic programming to explore optimal
online solutions to the indirect influencing problem \cite{Makumi.Bell.ea2024,Philor.Makumi.ea2024}.

Graph neural networks (GNNs) have emerged as a powerful tool for distributed
control applications due to their message-passing framework, which
naturally facilitates information exchange between networked agents.
This framework provides specific advantages for addressing the indirect
influence problem by encoding the network structure directly into
the neural network architecture. Previously, GNNs have been used to
perform autonomous path planning in multi-robot systems \cite{Ji.Li.ea2021}
and develop decentralized controllers for multi-agent flocking \cite{Gama.Tolstaya.ea2021}.
When applied to multi-agent control, GNNs are typically trained offline
and subsequently deployed in a decentralized, open-loop manner \cite{Li.Gama.ea2020,SanchezGonzalez.Heess.ea2018}.
However, the data sets against which these networks are trained may
not sufficiently represent all operating conditions, and the resulting
systems may be unable to adapt online to disturbances. The combination
of GNNs with Lyapunov-based update laws (i.e., Lb-GNNs) addresses
these shortcomings, enabling real-time adaptation for distributed
control tasks \cite{arxivFallin.Nino.ea2025}.

This work addresses an indirect influence problem using a deep Lb-GNN-based
backstepping controller to approximate unknown inter-agent interaction
dynamics. We formulate the problem with a team of cooperative nodes
seeking to influence a target node and regulate it to follow a desired
trajectory. We leverage the deep Lb-GNN's inherent ability to share
learned information, allowing the team of cooperative nodes to collectively
approximate the unknown target dynamics in a distributed manner. Our
method achieves online learning without prior data collection, making
it adaptable to changing system conditions. Our approach scales effectively
with network size by distributing the influence workload across multiple
influencing agents. The analytic results are empirically validated
with a numerical simulation.

\section{Preliminaries}

\subsection{Notation}

Let $\mathbb{R}$ and $\mathbb{Z}$ denote the sets of reals and integers,
respectively. For $x\in\mathbb{R}$, let $\mathbb{R}_{\geq x}\triangleq[x,\infty)$
and $\mathbb{Z}_{\geq x}\triangleq\mathbb{R}_{\geq x}\cap\mathbb{Z}$.
Let ${\rm {\bf h}}_{i}$ represent the $i^{\text{th}}$ standard basis
in $\mathbb{R}^{N}$, where the standard basis is made up of vectors
that have one entry equal to $1$ and the remaining $N-1$ entries
equal to $0$. The $p\times p$ identity matrix and the $p\times1$
column vector of ones are denoted by $I_{p}$ and \textbf{$\boldsymbol{1}_{p}$},
respectively. The $m\times n$ column vector of zeros is denoted by
$\boldsymbol{0}_{m\times n}$. The enumeration operator $[\cdot]$
is defined as $[N]\triangleq\{1,2,\ldots,N\}.$ Given a positive integer
$N$ and collection $\{x_{i}\}{}_{i\in[N]}$, let $[x_{i}]{}_{i\in[N]}\triangleq[x_{1},x_{2},\ldots,x_{N}]\in\mathbb{R}^{m\times N}$
for $x_{i}\in\mathbb{R}^{m}$. The cardinality of a set $A$ is denoted
by $|A|.$ For a set $A$ and an input $x$, the indicator function
is denoted by $\mathds{1}_{A}(x)$, where $\mathds{1}_{A}(x)=1$ if
$x\in A$, and $\mathds{1}_{A}(x)=0$ otherwise. For $x\in\mathbb{R}^{m}$,
let $\text{diag}\left\{ \cdot\right\} $ denote the diagonalization
operator which assigns the $i^{\text{th}}$ value of $x$ to the $ii^{\text{th}}$
entry of an output $m\times m$ matrix for all $i\in[m]$. Similarly,
let $\text{blkdiag}\left\{ \cdot\right\} $ denote the block diagonalization
operator. For $A\in\mathbb{R}^{m\times n}$ and $B\in\mathbb{R}^{p\times q}$,
\[
\text{blkdiag}\left\{ A,B\right\} \triangleq\begin{bmatrix}A & \boldsymbol{0}_{m\times q}\\
\boldsymbol{0}_{p\times n} & B
\end{bmatrix},
\]
where $\boldsymbol{0}_{m\times q}\in\mathbb{R}^{m\times q}$ is the
$m\times q$ matrix of zeros and $\text{blkdiag}\left\{ A,B\right\} \in\mathbb{R}^{(m+p)\times(n+q)}$.
Given $A\in\mathbb{R}^{m\times n}$ with columns $[a_{i}]_{i\in[n]}^{\top}\subset\mathbb{R}^{b}$,
$\text{vec}(A)\triangleq[a_{1}^{\top},a_{2}^{\top},\ldots,a_{n}^{\top}]{}^{\top}\in\mathbb{R}^{mn}$.

\subsection{Graphs}

For $N\in\mathbb{Z}_{>0}$, let $G\triangleq(V,E)$ be a static and
undirected graph with node set $V\triangleq[N]$ and edge set $E\subseteq V\times V$.
The edge $(i,j)\in E$ if and only if the node $i$ can send information
to node $j$. In this work, $G$ is undirected, so $(i,j)\in E$ if
and only if $(j,i)\in E$. Let $A\triangleq[a_{ij}]\in\mathbb{R}^{N\times N}$
denote the adjacency matrix of $G$, where $a_{ij}=\mathds{1}_{E}(i,j)$
and $a_{ii}=0$ for all $i\in V$. An undirected graph is connected
if and only if there exists a sequence of edges in $E$ between any
two nodes in $V$. The neighborhood of node $i$ is denoted by $\mathcal{N}_{i}$,
where $\mathcal{N}_{i}\triangleq\{j\in V:(j,i)\in E\}$. The augmented
neighborhood of node $i$ is denoted by $\overline{\mathcal{N}}_{i}$,
where $\overline{\mathcal{N}}_{i}\triangleq\mathcal{N}_{i}\cup\{i\}$.
The set of node $i$'s $k$-hop neighbors is denoted by $\mathcal{N}_{i}^{k}$.
The degree matrix of $G$ is denoted by $D\triangleq\text{diag}(A\boldsymbol{1}_{N})$.
The Laplacian matrix of $G$ is denoted by $\mathcal{L}_{G}\triangleq D-A\in\mathbb{R}^{N\times N}$.
In this work, we focus on a class of communication networks where
information flows through all nodes, which occurs when the graph $G$
is connected.

The set of permutations on $[N]$ is denoted by $S_{N}$. For a graph
$G$ and a permutation $S_{N}$, we define the graph permutation operation
as $p\ast G$, where $p\in S_{N}$. Formally, two graphs $G_{1}$
and $G_{2}$ are isomorphic if they have the same number of nodes
and there exists a permutation such that $G_{1}=p\ast G_{2}$ \cite{Azizian.Lelarge2020}.

\subsection{Graph Neural Networks\label{subsec:Graph-Neural-Networks}}

The GNN architecture employs a message-passing framework in which
nodes exchange and update vector-valued messages using feedforward
layers. These messages are the outputs of each GNN layer calculated
at the node level. The GNN processes an input graph $G$ with a set
of node features to generate node embeddings, which are the outputs
of the final GNN layer at each node. In this framework, each node's
output in a layer is informed by its previous layer output and messages
received from neighboring nodes. We use superscripts to denote layer-specific
elements. For example, $W_{i}^{(k)}$ denotes the $i^{\text{th}}$
node's weights for the $k^{\text{th}}$ GNN layer. Subscripts indicate
the specific node where an embedding or function is applied.

Let the activation function for the $k^{\text{th}}$ GNN layer at
node $i$ be denoted by $\sigma^{(k)}(\cdot)$, where $\sigma(\cdot)$
is an element-wise, smooth, bounded nonlinearity with bounded first
and second derivatives appended with $1$ to allow for biases such
that for $x\in\mathbb{R}^{m}$, $\sigma(x)=[\sigma(x_{0}),\ldots,\sigma(x_{m}),1]{}^{\top}$.
Let the output of layer $j$ for node $i$ of the GNN be denoted by
$\phi_{i}^{(j)}$. Let the aggregation function for the GNN at node
$i$ be denoted by $\sum_{m\in\mathcal{\overline{N}}_{i}}\phi_{m}^{(k-1)}$.
Additionally, let $d^{(j)}$ represent the number of features at the
$j^{\text{th}}$ layer of the GNN for $j=0,\ldots,k$, where $j$
denotes the layer index. Let $d^{(in)}$ denote the dimension of the
GNN input at the base layer of each node. Let $d^{(out)}$ denote
the output dimension of the output layer of each node. Let $d^{(j)}=d^{(in)}+1$
when $j=-1$ and $d^{(j)}=d^{(out)}$ when $j=k$. 

Next, we define the deep GNN architecture for an arbitrary number
of layers. Let $\overline{\kappa}_{i}$ denote the $i^{\text{th}}$
node's input augmented with a bias term such that $\overline{\kappa}_{i}\triangleq[\kappa_{i}^{\top},1]{}^{\top}\in\mathbb{R}^{d^{(in)}+1}$.
Let $\overline{\boldsymbol{\kappa}}\triangleq[\overline{\kappa}_{1},\ldots,\overline{\kappa}_{N}]=[\overline{\kappa}_{i}]{}_{i\in V}\in\mathbb{R}^{(d^{(in)}+1)\times N}$.
To distinguish between node indices in cascading layers, we let $m^{(j)}$
denote an arbitrary index corresponding to the $j^{\text{th}}$ layer,
where $m^{(j)}\in V$. Then, we let $\boldsymbol{{\rm m}}^{(j)}$
denote $m^{(j)}\in V$. Let $\boldsymbol{\phi}^{(j)}\triangleq[\phi_{1}^{(j)},\ldots,\phi_{N}^{(j)}]=[\phi_{i}^{(j)}]_{m^{(j)}}\in\mathbb{R}^{(d^{(j-1)}+1)\times N}$.
Let $\overline{A}\in\mathbb{R}^{N\times N}$ represent the adjacency
matrix with self loops, and $\overline{A}_{i}\in\mathbb{R}^{1\times N}$
represent the $i^{\text{th}}$ row of the adjacency matrix. The GNN
architecture at node $i$ is expressed as
\begin{equation}
\phi_{i}^{(j)}\triangleq\begin{cases}
\sigma^{(j)}\left(W_{i}^{(j)\top}\overline{\boldsymbol{\kappa}}\overline{A}_{i}^{\top}\right), & j=0,\\
\sigma^{(j)}\left(W_{i}^{(j)\top}\boldsymbol{\phi}^{(j-1)}\overline{A}_{i}^{\top}\right), & j=1,\ldots,k-1,\\
W_{i}^{(j)\top}\phi_{i}^{(j-1)}, & j=k,
\end{cases}\label{eq:deep-gnn-architecture}
\end{equation}
for all $i\in V$, an input $y^{(j)}\in\mathbb{R}^{m}$, the partial
derivative $\frac{\partial\sigma^{(j)}}{\partial y^{(j)}}:\mathbb{R}^{d^{(j)}}\to\mathbb{R}^{(d^{(j)}+1)\times d^{(j)}}$
of the activation function vector at the $j^{\text{th}}$ layer with
respect to its input is given as $[\sigma^{\prime(j)}(y_{1}){\rm {\bf h}_{1}},\ldots,\sigma^{\prime(j)}(y_{d^{(j)}}){\rm {\bf h}}_{d^{(j)}},\boldsymbol{0}_{d^{(j)}}]{}^{\top}\in\mathbb{R}^{(d^{(j)}+1)\times d^{(j)}}$,
where ${\rm {\bf h}}_{i}$ is the $i^{\text{th}}$ standard basis
in $\mathbb{R}^{d^{(j)}}$ and $\boldsymbol{0}_{d^{(j)}}$ is the
zero vector in $\mathbb{R}^{d^{(j)}}$. The vector of weights for
the GNN at node $i$ is defined as $\theta_{i}\in\mathbb{R}^{p_{{\rm GNN}}}$,
where
\begin{equation}
\theta_{i}\triangleq\begin{bmatrix}\text{vec}\left(W_{i}^{(0)}\right)^{\top}, & \ldots, & \text{vec}\left(W_{i}^{(k)}\right)^{\top}\end{bmatrix}^{\top},\label{eq:deep-gnn-weights}
\end{equation}
and $p_{{\rm GNN}}\triangleq\sum_{j=0}^{k}(d^{(j)})(d^{(j-1)}+1)$.
For each node $i\in V$, the first partial derivative of the deep
GNN architecture at node $i$ in (\ref{eq:deep-gnn-architecture})
with respect to (\ref{eq:deep-gnn-weights}) is
\[
\frac{\partial\phi_{i}}{\partial\theta_{i}}\triangleq\begin{bmatrix}\frac{\partial\phi_{i}}{\partial\text{vec}\left(W_{i}^{(0)}\right)}, & \ldots, & \frac{\partial\phi_{i}}{\partial\text{vec}\left(W_{i}^{(k)}\right)}\end{bmatrix},
\]
where $\frac{\partial\phi_{i}}{\partial\theta_{i}}\in\mathbb{R}^{d^{(out)}\times p_{{\rm GNN}}}$\cite[Lemma 1]{arxivFallin.Nino.ea2025}.
The closed form of the partial derivatives of $\phi_{i}$ with respect
to $\text{vec}(W_{i}^{(\ell)})$ for all $\ell=0,\ldots,k$ is defined
in \cite[Table 1]{arxivFallin.Nino.ea2025}. 

Let $\mathcal{G}_{N}$ denote the set of undirected graphs with $N$
nodes. Let $F\in\mathbb{R}^{Nm}$ denote the ensemble feature vector
of the graph, where each node has features in $\mathbb{R}^{m}.$ A
function $f:\mathcal{G}_{N}\times\mathbb{R}^{Nm}\to\mathbb{R}^{N\ell}$
on a graph $G$ is called equivariant if $f(p\ast G,p\ast F)=p\ast f(G,F)$
for every permutation $p\in S_{N}$, every $G\in\mathcal{G}_{N}$,
and every $F\in\mathbb{R}^{Nm}$ \cite{Azizian.Lelarge2020}. An example
of an equivariant function on a graph is a continuous function of
a node's features evaluated at that node. In this work, we approximate
equivariant functions of the graph using a deep Lb-GNN.

The expressive power of deep GNNs is intrinsically linked to the message-passing
framework \cite{Hamilton2020}. This expressive ability can be formalized
through the 1-Weisfeiler-Leman (1-WL) test, a graph isomorphism heuristic
that iteratively refines node labels based on neighborhood information.
The 1-WL test's ability to distinguish graph structures has been shown
to be representative of the discriminative ability of the GNN architecture.
The message-passing GNN in (\ref{eq:deep-gnn-architecture}) universally
approximates equivariant, graph-valued functions that are less separating
the 2-WL test \cite{Azizian.Lelarge2020}.

\section{Problem Formulation\label{sec:Problem-Formulation}}

In this section, we consider a network of $N$ nodes tasked with collaboratively
influencing a target node to follow a desired trajectory. The target
node has unknown, unstructured interaction dynamics with respect to
each influencing node and unknown, unstructured drift dynamics. The
influencing nodes have homogeneous unknown, unstructured dynamics.
The deep Lb-GNN architecture is used to learn the target node's unknown
drift dynamics, the target node's unknown interaction dynamics, and
the unknown interaction dynamics between influencing nodes.

\subsection{Systems Dynamics}

Consider a network of $N$ influencing nodes indexed by $i\in V$
with the static, undirected communication graph $G=(V,E)$, where
$N\geq2$. The influencing nodes' objective is to collaboratively
regulate the state of a target node to follow a desired trajectory.
All influencing nodes can measure their relative states with respect
to the target node. The state of the $i^{\text{th}}$ influencing
node is denoted by $y_{i}\in\mathbb{R}^{n}$, for all $i\in V$. Let
$Q_{i}\triangleq[(y_{m}\cdot\mathds{1}_{\overline{\mathcal{N}}_{i}}(m))^{\top}]_{m\in V}^{\top}\in\mathbb{R}^{nN}$
denote the concatenated states of the $i^{\text{th}}$ influencing
node and its neighbors. The dynamics of the $i^{\text{th}}$ influencing
node are 
\begin{equation}
\dot{y}_{i}=f\left(Q_{i}\right)+u_{i},\label{eq:influencer-dynamics}
\end{equation}
where $u_{i}\in\mathbb{\mathbb{R}}^{n}$ denotes the $i^{\text{th}}$
influencing node's control input and $f:\mathbb{\mathbb{R}}^{nN}\to\mathbb{\mathbb{R}}^{n}$
is an unknown, unstructured function that denotes the interaction
dynamics between node $i$ and its neighboring nodes, indexed by $j\in\mathcal{N}_{i}$.
The target node is indirectly influenced by an unknown interaction
function denoted by $g:\mathbb{R}^{n}\times\mathbb{R}^{n}\to\mathbb{R}$,
which depends on the target node's state and an individual influencing
node's state. Let the target node's state be denoted by $x_{0}\in\mathbb{R}^{n}.$
The dynamics of the target node are given as
\begin{equation}
\dot{x}_{0}=h\left(x_{0}\right)+\sum_{i\in V}g\left(x_{0},y_{i}\right)\left(x_{0}-y_{i}\right),\label{eq:target-dynamics}
\end{equation}
where $h:\mathbb{R}^{n}\to\mathbb{R}^{n}$ is an unknown, unstructured
function that denotes the target node's drift dynamics. The functions
$f$, $g$, and $h$ are assumed to be continuous. In practice, the
target node's dynamics are often constrained by initial conditions
and environmental factors which limit its state variations. This is
formalized in the following assumption.
\begin{assumption}
\label{ass:interaction-function-bounds}\cite[Assumption 1]{Le.Luo.ea2020}
There exist known constants $\overline{g}\in\mathbb{R}_{>0}$ and
$\overline{h}\in\mathbb{R}_{>0}$ such that $\|g(x_{0},y_{i})\|\leq\overline{g}$
and $\|h(x_{0})\|\leq\overline{h}$.
\end{assumption}

\subsection{Control Objective}

The influencing nodes' control objective is to regulate $x_{0}$ to
a user-defined desired trajectory $x_{d}\in\mathbb{R}^{n}$ of class
$\mathcal{C}^{1}$, despite the target node's uncertain drift and
interaction dynamics described by (\ref{eq:target-dynamics}). To
this end, we define the state tracking error $e\in\mathbb{R}^{n}$
as
\begin{equation}
e\triangleq x_{0}-x_{d}.\label{eq:tracking-error}
\end{equation}

\begin{assumption}
\label{ass:goal-location-bounds}\cite[Assumption 1]{Nino.Patil.ea2024}
There exist known constants $\overline{x}_{d}\in\mathbb{R}$ and $\overline{\dot{x}}_{d}\in\mathbb{R}$
such that $\|x_{d}(t)\|\leq\overline{x}_{d}$ and $\|\dot{x}_{d}(t)\|\leq\overline{\dot{x}}_{d}$
for all $t\in[t_{0},\infty)$.
\end{assumption}
Since the target node's dynamics in (\ref{eq:target-dynamics}) do
not explicitly contain a control input, a backstepping control technique
is employed. The signal $y_{i}$ is treated as a virtual control input.
We define a backstepping error $\eta_{i}\in\mathbb{R}^{n}$ for the
$i^{\text{th}}$ influencing node as
\begin{equation}
\eta_{i}\triangleq y_{d,i}-y_{i},\label{eq:backstepping-error}
\end{equation}
where $y_{d,i}\in\mathbb{R}^{n}$ denotes the $i^{\text{th}}$ influencing
node's desired trajectory for all $i\in V$. Substituting (\ref{eq:target-dynamics})
into the time derivative of (\ref{eq:tracking-error}) and using (\ref{eq:backstepping-error})
yields
\begin{equation}
\dot{e}=h\left(x_{0}\right)-\dot{x}_{d}+\sum_{i\in V}g\left(x_{0},y_{i}\right)\left(x_{0}+\eta_{i}-y_{d,i}\right).\label{eq:error-derivative-1}
\end{equation}

\section{Control Synthesis}

Based on the subsequent stability analysis, the desired trajectory
of the $i^{\text{th}}$ influencing node is designed as

\begin{equation}
y_{d,i}=k_{1}e+x_{d},\label{eq:influence-strategy}
\end{equation}
for all $i\in V$, where $k_{1}\in\mathbb{R}_{>0}$ is a user-defined
gain. Substituting (\ref{eq:influence-strategy}) into (\ref{eq:error-derivative-1})
yields
\begin{align}
\dot{e} & =h\left(x_{0}\right)-\dot{x}_{d}+\sum_{i\in V}g\left(x_{0},y_{i}\right)\left(\eta_{i}+\left(1-k_{1}\right)e\right).\label{eq:error-derivative-2}
\end{align}
Substituting (\ref{eq:influencer-dynamics}), (\ref{eq:error-derivative-2}),
and the time derivative of (\ref{eq:influence-strategy}) into the
time derivative of (\ref{eq:backstepping-error}) yields
\begin{equation}
\begin{aligned}\dot{\eta}_{i} & =k_{1}\sum_{j\in V\backslash\mathcal{\overline{N}}_{i}}g\left(x_{0},y_{j}\right)\left(\eta_{j}+\left(1-k_{1}\right)e\right)\\
 & +\left(1-k_{1}\right)\dot{x}_{d}-u_{i}+F\left(R_{i}\right),
\end{aligned}
\label{eq:backstepping-derivative-2}
\end{equation}
where $F\left(R_{i}\right)\triangleq k_{1}h\left(x_{0}\right)-f\left(Q_{i}\right)+\sum_{j\in\mathcal{\overline{N}}_{i}}k_{1}g\left(x_{0},y_{j}\right)\left(\eta_{j}+\left(1-k_{1}\right)e\right)$,
$R_{i}\triangleq[x_{0}^{\top},Q_{i}^{\top}]^{\top}\in\mathbb{R}^{d^{(in)}}$
for all $i\in V$, and $d^{(in)}=n(N+1)$.

\subsection{Function Approximation Capabilities of GNNs}

To approximate the unknown inter-agent dynamics, we leverage the universal
function approximation properties of GNNs established in \cite{Azizian.Lelarge2020}.
To this end, we define the deep Lb-GNN $\Phi\triangleq[\phi_{i}^{\top}]_{i\in V}^{\top}$,
with $p$ total weights at each node $i$. The output of a $k$-layer
Lb-GNN $\Phi$ at node $i$ is a function of its $(k-1)$-hop neighbor's
weights, $(\theta_{j})_{j\in\overline{\mathcal{N}}_{i}^{k-1}}$, and
its $k$-hop neighbor's inputs, $(R_{j})_{j\in\overline{\mathcal{N}}_{i}^{k}}$.
Node $i$'s component of the Lb-GNN $\Phi$ is denoted by $\phi_{i}\triangleq\phi_{i}(R_{i},(R_{j})_{j\in\mathcal{N}_{i}^{k}},\theta_{i},(\theta_{j})_{j\in\mathcal{N}_{i}^{k-1}})$.
Let the input to the Lb-GNN $\Phi$ be denoted as $R\triangleq[R_{i}^{\top}]_{i\in V}^{\top}\in\mathcal{Y}$,
where $\mathcal{Y}\subset\mathbb{R}^{Nd^{(in)}}$. Let $\Omega$ denote
the compact set over which the universal function approximation property
of GNNs holds. We use the Lb-GNN $\Phi$ to approximate the function
$H:\mathbb{R}^{Nd^{(in)}}\to\mathbb{R}^{nN}$, where $H(R)\triangleq[\begin{array}{ccc}
F(R_{1})^{\top}, & \ldots, & F(R_{N})^{\top}\end{array}]^{\top}$, and $F(R_{i})$ denotes the unknown inter-agent interaction dynamics
for all $i\in V$.
\begin{assumption}
\label{ass:less-separating}The function $H$ is less separating than
the 2-WL test.
\end{assumption}
Assumption \ref{ass:less-separating} states that the graph-valued
function $H$ must not reveal more structural information of $G$
than is able to be distinguished through the $2$-WL test. Assumption
\ref{ass:less-separating} is reasonable since, in practice, any continuous
function that at each node aggregates over neighboring nodes and applies
a nonlinearity to this aggregated value is provably less separating
than the 1-WL test \cite{Azizian.Lelarge2020}. The 1-WL test is less
separating than the 2-WL test, so any function of this form can be
universally approximated by a GNN. We refer the reader to \cite{Azizian.Lelarge2020}
for a more complete exposition on the classes of functions that are
able to be universally approximated by GNNs.

If (i) the graph-valued function being approximated by the GNN is
less separating than the 2-WL test and (ii) the input space of the
graph valued function is compact, the universal function approximation
property of GNNs in \cite[Lemma 3]{arxivFallin.Nino.ea2025} holds.
We note that the function $H(R)$ is an equivariant function of the
graph $G$ because the outputs at each node, $F(R_{i})$, are a function
of each node's features and the features of its one-hop neighborhood.
Thus, if the graph $G$ were permuted, $H(R)$ would reflect this
permutation.

Let the ensemble vector of the GNN node weights be denoted as $\theta\triangleq[\theta_{i}^{\top}]_{i\in V}^{\top}\in\mathbb{R}^{pN}$.
Let the loss function for the GNN $\Phi$ be defined as $\mathcal{L}(\theta)\triangleq\intop_{\Omega}(\|H(R)-\Phi(R,\theta)\|^{2}+\sigma\|\theta\|^{2}d\mu(R))$,
where $\mu$ denotes the Lebesgue measure, $\sigma\in\mathbb{R}_{>0}$
denotes a regularizing constant, and $\sigma\|\theta\|^{2}$ represents
$L_{2}$ regularization. Let $\mho\subset\mathbb{R}^{pN}$ denote
a user-selected compact, convex parameter search space with a smooth
boundary, satisfying $\boldsymbol{0}_{pN}\in\text{int}(\mho)$. Additionally,
define $\overline{\theta}\in\mathbb{R}$, where $\overline{\theta}\triangleq\max_{\theta\in\mho}\|\theta\|$.
We denote the ideal parameters of the GNN $\Phi$ as $\theta^{\ast}\in\mho\subset\mathbb{R}^{pN}$
where
\begin{equation}
\theta^{\ast}\triangleq\underset{\theta\in\mho}{\arg\min}\:\mathcal{L}\left(\theta\right).\label{eq:theta_star}
\end{equation}
For clarity in the subsequent analysis, we desire the solutions $\theta^{\ast}$
to (\ref{eq:theta_star}) to be unique. To this end, the following
assumption is made.
\begin{assumption}
\cite[Assumption 3]{arxivFallin.Nino.ea2025} \label{ass:convex-loss-function}
The loss function $\mathcal{L}$ is strictly convex over set $\mho$.
\end{assumption}
\begin{remark}
The universal function approximation property of GNNs was not invoked
in the definition of $\theta^{\ast}.$ The universal function approximation
theorem for GNNs established in \cite[Lemma 3]{arxivFallin.Nino.ea2025}
states that the function space of GNNs is dense in $\mathcal{C}_{E}(\Omega,\mathbb{R}^{nN})$,
which denotes the space of real-valued, continuous, equivariant functions
that map from $\Omega$ to $\mathbb{R}^{nN}$. Let $\varepsilon:\Omega\to\mathbb{R}^{nN}$
denote an unknown function representing the reconstruction error that
is bounded as $\sup_{G\times R\in\Omega}\|H(R)-\Phi(R,\theta)\|\leq\overline{\varepsilon}$.
Therefore, $\intop_{\Omega}\|F(R)-\Phi(R,\theta)\|^{2}d\mu(R)<\varepsilon^{2}\mu(\Omega)$.
As a result, for any prescribed $\overline{\varepsilon}\in\mathbb{R}$,
there exists a GNN $\Phi$ such that for all $i\in V$, there exist
weights $\theta_{i}\in\mathbb{R}^{p}$ which satisfy $\sup_{G\times R\in\Omega}\|F(R)-\Phi(R,\theta)\|\leq\overline{\varepsilon}$.
However, determining a search space $\mho$ for an arbitrary $\varepsilon$
remains an open challenge. So, we allow $\mho$ to be arbitrarily
selected in the above analysis, at the expense of guarantees of the
approximation accuracy.
\end{remark}
The unknown dynamics in (\ref{eq:backstepping-derivative-2}) are
modeled using the GNN $\Phi$ at node $i$ as
\begin{equation}
F\left(R_{i}\right)=\phi_{i}^{\ast}+\varepsilon_{i}(R_{i},(R_{j})_{j\in\mathcal{N}_{i}^{k}}),\label{eq:lb-gnn}
\end{equation}
for all $i\in V$, where $\theta_{i}^{\ast}$ are the ideal parameters
of the GNN $\Phi$ at node $i$, $\phi_{i}^{\ast}\triangleq\phi_{i}(R_{i},(R_{j})_{j\in\mathcal{N}_{i}^{k}},\theta_{i}^{\ast},(\theta_{j}^{\ast})_{j\in\mathcal{N}_{i}^{k-1}})$
and $\varepsilon_{i}\in\mathbb{R}^{n}$ denotes the $i^{\text{th}}$
component of $\varepsilon$ for all $i\in V$, where $\varepsilon\triangleq[\begin{array}{c}
\varepsilon_{i}^{\top}\end{array}]_{i\in V}^{\top}$. The Lb-GNN's approximation of $H(R)$ at node $i$ is given as $\widehat{\phi}_{i}\triangleq\phi_{i}(R_{i},(R_{j})_{j\in\mathcal{N}_{i}^{k}},\widehat{\theta}_{i},(\widehat{\theta}_{j})_{j\in\mathcal{N}_{i}^{k-1}})$,
where $\widehat{\theta}_{i}\in\mathbb{R}^{p}$ denotes the Lb-GNN's
weight estimates at node $i$ that are subsequently designed through
a Lyapunov-based analysis. The approximation objective is to determine
optimal estimates of $\widehat{\theta}$ such that $R\mapsto\Phi(R,\theta)$
approximates $R\mapsto H(R)$ with minimal error for any $R\in\Omega$. 

\subsection{Control Design}

Based on (\ref{eq:backstepping-derivative-2}) and the subsequent
stability analysis, the controller at node $i$ is designed as{\small
\begin{align}
u_{i} & =k_{2}\eta_{i}+\widehat{\phi}_{i}+\sum_{j\in\mathcal{\mathcal{N}}_{i}^{k-1}}\frac{\partial\widehat{\phi}_{i}}{\partial\widehat{\theta}_{j}}\left(\widehat{\theta}_{i}-\widehat{\theta}_{j}\right)+\left(1-k_{1}\right)\dot{x}_{d},\label{eq:controller}
\end{align}
}where $k_{2}\in\mathbb{R}_{>0}$ is a user-defined gain, for all
$i\in V$. Substituting (\ref{eq:lb-gnn}) and (\ref{eq:controller})
into (\ref{eq:backstepping-derivative-2}) yields
\begin{equation}
\begin{aligned}\dot{\eta}_{i} & =k_{1}\sum_{j\in V\backslash\mathcal{\overline{N}}_{i}}g\left(x_{0},y_{j}\right)\left(\eta_{j}+\left(1-k_{1}\right)e\right)\\
 & -k_{2}\eta_{i}+\phi_{i}^{\ast}-\widehat{\phi}_{i}+\varepsilon_{i}-\sum_{j\in\mathcal{\mathcal{N}}_{i}^{k-1}}\frac{\partial\widehat{\phi}_{i}}{\partial\widehat{\theta}_{j}}\left(\widehat{\theta}_{i}-\widehat{\theta}_{j}\right).
\end{aligned}
\label{eq:backstepping-derivative-3}
\end{equation}

\subsection{Adaptive Update Law Design}

Based on the subsequent stability analysis, we design the distributed
adaptive update law for the Lb-GNN as
\begin{align}
\dot{\widehat{\theta}}_{i} & =\text{proj}\left(\aleph_{i}\right),\label{eq:update-law}
\end{align}
where{\small
\begin{align*}
\aleph_{i} & \triangleq\Gamma_{i}\left(\left(\sum_{j\in\mathcal{\overline{N}}_{i}^{k-1}}\frac{\partial\widehat{\phi}_{i}}{\partial\widehat{\theta}_{j}}^{\top}\right)\eta_{i}-k_{3}\left(\sum_{j\in\mathcal{N}_{i}}\left(\widehat{\theta}_{i}-\widehat{\theta}_{j}\right)+\widehat{\theta}_{i}\right)\right),
\end{align*}
}where $k_{3}\in\mathbb{R}_{>0}$ is a user-defined gain, $\Gamma_{i}\in\mathbb{R}^{p\times p}$
is a symmetric, positive-definite, user-defined gain matrix, and the
projection operator ensures $\widehat{\theta}\in\mho$ for all $t\in\mathbb{R}_{\geq0}$,
defined as in \cite[Appendix E.4]{Krstic.Kanellakopoulos.ea1995}.
In the following section, we perform a stability analysis for the
ensemble system.

\section{Stability Analysis}

To quantify the approximation given by (\ref{eq:lb-gnn}), the parameter
estimation error $\widetilde{\theta}_{i}\in\mathbb{R}^{p}$ at node
$i$ is defined as
\begin{equation}
\widetilde{\theta}_{i}\triangleq\theta_{i}^{\ast}-\widehat{\theta}_{i},\label{eq:parameter-estimation-error}
\end{equation}
for all $i\in V$. Applying a first-order Taylor theorem of $\phi_{i}^{\ast}$
about the point $(R_{i},(R_{j})_{j\in\mathcal{N}_{i}^{k}},\widehat{\theta}_{i},(\widehat{\theta}_{j})_{j\in\mathcal{N}_{i}^{k-1}})$
yields
\begin{equation}
\phi_{i}^{\ast}=\widehat{\phi}_{i}+\frac{\partial\widehat{\phi}_{i}}{\partial\widehat{\theta}_{i}}\widetilde{\theta}_{i}+\sum_{j\in\mathcal{\mathcal{N}}_{i}^{k-1}}\frac{\partial\widehat{\phi}_{i}}{\partial\widehat{\theta}_{j}}\widetilde{\theta}_{j}+\sum_{j\in\mathcal{\overline{N}}_{i}^{k-1}}T_{j},\label{eq:taylor-series}
\end{equation}
for all $i\in V$, where $T_{j}\in\mathbb{R}^{n}$ denotes the first
Lagrange remainder, which accounts for the error introduced by truncating
the Taylor approximation after the first-order term. Substituting
(\ref{eq:taylor-series}) into (\ref{eq:backstepping-derivative-3})
and using (\ref{eq:parameter-estimation-error}) yields

\begin{equation}
\begin{aligned}\dot{\eta}_{i} & =k_{1}\sum_{j\in V\backslash\mathcal{\overline{N}}_{i}}g\left(x_{0},y_{j}\right)\left(\eta_{j}+\left(1-k_{1}\right)e\right)\\
 & +\varepsilon_{i}-k_{2}\eta_{i}+\sum_{j\in\mathcal{\mathcal{\overline{N}}}_{i}^{k-1}}\frac{\partial\widehat{\phi}_{i}}{\partial\widehat{\theta}_{j}}\widetilde{\theta}_{i}+\chi_{i},
\end{aligned}
\label{eq:backstepping-derivative-4}
\end{equation}
where $\chi_{i}\triangleq\sum_{j\in\mathcal{\mathcal{N}}_{i}^{k-1}}\frac{\partial\widehat{\phi}_{i}}{\partial\widehat{\theta}_{j}}\left(\theta_{j}^{\ast}-\theta_{i}^{\ast}\right)+\sum_{j\in\mathcal{\overline{N}}_{i}^{k-1}}T_{j}$.
\begin{lemma}
\cite[Lemma 5]{arxivFallin.Nino.ea2025}\label{lemma:lagrange-bounds}
The first Lagrange remainder $T_{i}$ is upper bounded as $\|T_{i}\|\leq\rho_{T_{1},i}(\|\kappa\|)\|\widetilde{\theta}_{i}\|^{2}$
for all $i\in V$, where $\rho_{T_{1},i}:\mathbb{R}_{\geq0}\to\mathbb{R}_{\geq0}$
is a strictly increasing polynomial that is quadratic in the norm
of the ensemble GNN input $\kappa\triangleq[\kappa_{i}^{\top}]_{i\in V}^{\top}\in\mathbb{R}^{Nd^{(\text{in})}}.$
\end{lemma}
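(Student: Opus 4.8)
The plan is to read off $T_i$ directly from the mean-value (Lagrange) form of the first-order Taylor theorem already invoked in (\ref{eq:taylor-series}), and then bound the resulting second-order term. The remainder in (\ref{eq:taylor-series}) is organized into per-node contributions $\sum_{j\in\overline{\mathcal{N}}_i^{k-1}}T_j$; the term $T_i$ collects the pure second-order contribution in the $\widehat{\theta}_i$-direction, so Taylor's theorem gives $T_i=\tfrac{1}{2}\,\widetilde{\theta}_i^\top\big(\partial^2\phi_i/\partial\widehat{\theta}_i^2\big)\big|_{\bar\theta}\,\widetilde{\theta}_i$, where the Hessian is evaluated at an intermediate point $\bar\theta=\widehat{\theta}+c\,\widetilde{\theta}$ for some $c\in(0,1)$. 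Because the projection in (\ref{eq:update-law}) keeps $\widehat{\theta}\in\mho$ and $\theta^\ast\in\mho$, convexity of $\mho$ places $\bar\theta\in\mho$, so every weight factor appearing in the Hessian is bounded by $\overline{\theta}$. The outer factor $\tfrac{1}{2}\widetilde{\theta}_i^\top(\cdot)\widetilde{\theta}_i$ already supplies the claimed quadratic dependence on $\|\widetilde{\theta}_i\|$; what remains is to majorize the operator norm of the Hessian by a polynomial in $\|\kappa\|$.

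To bound that Hessian, I would differentiate the closed-form first derivatives of (\ref{eq:deep-gnn-architecture}) (cf. \cite[Lemma 1]{arxivFallin.Nino.ea2025} and its companion table) a second time via the chain rule through the $k$ message-passing layers. The key structural observation is that the ensemble input $\kappa$ enters (\ref{eq:deep-gnn-architecture}) only through the base-layer pre-activation $W_i^{(0)\top}\overline{\boldsymbol{\kappa}}\overline{A}_i^\top$, which is affine in $\kappa$; every subsequent layer passes its argument through the bounded nonlinearity $\sigma^{(j)}$, whose values and first two derivatives are bounded by hypothesis. Consequently, each differentiation with respect to the base-layer weights $W_i^{(0)}$ contributes at most one factor of $\overline{\boldsymbol{\kappa}}$, whereas differentiations with respect to the weights of any layer $j\ge 1$ contribute only bounded factors (products of $\sigma^{\prime(j)}$, $\sigma^{\prime\prime(j)}$, bounded intermediate embeddings, and weights bounded by $\overline{\theta}$).

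Since $T_i$ is a second-order term, at most two differentiations with respect to $W_i^{(0)}$ occur, so the highest power of $\|\overline{\boldsymbol{\kappa}}\|$ (equivalently $\|\kappa\|$) appearing in any entry of the Hessian is two, arising from the $(W_i^{(0)},W_i^{(0)})$ block; all remaining blocks are at most linear or constant in $\|\kappa\|$. Collecting the activation-derivative bounds and the weight bound $\overline{\theta}$ into nonnegative constants, the operator norm of the Hessian is majorized by $\rho_{T_1,i}(\|\kappa\|)=a_{2,i}\|\kappa\|^2+a_{1,i}\|\kappa\|+a_{0,i}$ with nonnegative coefficients and positive leading coefficient $a_{2,i}$, which is therefore strictly increasing on $\mathbb{R}_{\geq 0}$ and quadratic in $\|\kappa\|$. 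Combining with the $\tfrac{1}{2}\|\widetilde{\theta}_i\|^2$ factor yields $\|T_i\|\le\rho_{T_1,i}(\|\kappa\|)\|\widetilde{\theta}_i\|^2$, as claimed.

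The main obstacle I anticipate is the bookkeeping of the second-order chain rule through the deep, aggregating architecture: verifying that the message-passing aggregation $\sum_{m\in\overline{\mathcal{N}}_i}$ and the multi-layer composition do not smuggle in additional powers of $\kappa$ beyond the one contributed per $W_i^{(0)}$-differentiation, and confirming that the mixed-layer (cross-block) Hessian terms genuinely remain of degree at most one in $\|\kappa\|$. Establishing these degree counts rigorously, rather than the routine constant bounds on the bounded activation factors, is where the real work lies.
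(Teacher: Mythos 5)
Your overall strategy --- confining the intermediate point to $\mho$ via convexity and the projection operator, observing that $\kappa$ enters (\ref{eq:deep-gnn-architecture}) only through the affine base-layer pre-activation, using the boundedness of $\sigma$, $\sigma'$, $\sigma''$, and counting at most one factor of $\kappa$ per differentiation in $W_i^{(0)}$ to cap the degree at two --- is the right one, and it matches the standard argument behind the cited result (note the paper does not reprove this lemma; it imports it from \cite[Lemma 5]{arxivFallin.Nino.ea2025}). However, there is a genuine gap at your very first step: you define $T_i$ as only the \emph{pure} second-order term in the $\widehat{\theta}_i$-direction, $T_i=\tfrac{1}{2}\widetilde{\theta}_i^{\top}\bigl(\partial^{2}\phi_i/\partial\widehat{\theta}_i^{2}\bigr)\widetilde{\theta}_i$. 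The expansion (\ref{eq:taylor-series}) is a Taylor expansion \emph{jointly} in all weight blocks $(\theta_i,(\theta_j)_{j\in\mathcal{N}_i^{k-1}})$, so its exact remainder is the full quadratic form $\tfrac{1}{2}\sum_{j,l\in\overline{\mathcal{N}}_i^{k-1}}\widetilde{\theta}_j^{\top}H_{jl}\widetilde{\theta}_l$, including the cross blocks $H_{jl}$ with $j\neq l$. These cross blocks do not vanish for a message-passing GNN: node $i$'s layer-one output aggregates its neighbors' layer-zero embeddings, so, for example, $\partial^{2}\phi_i/\partial\,\text{vec}(W_i^{(1)})\,\partial\,\text{vec}(W_j^{(0)})\neq0$ for $j\in\mathcal{N}_i$. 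With your definition, $\sum_{j\in\overline{\mathcal{N}}_i^{k-1}}T_j$ does not equal the true remainder, so (\ref{eq:taylor-series}) fails as an identity and the substitution producing (\ref{eq:backstepping-derivative-4}) collapses; even a correct bound on your $T_i$ would not bound the quantity the closed-loop analysis actually needs.

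The repair is standard but must be stated. Keep the full quadratic form and distribute the cross terms into per-node quadratics, either via Young's inequality $2\|\widetilde{\theta}_j\|\|\widetilde{\theta}_l\|\leq\|\widetilde{\theta}_j\|^{2}+\|\widetilde{\theta}_l\|^{2}$, or by bounding the whole form by $\tfrac{1}{2}\|H\|\,\|\widetilde{\Theta}\|^{2}$ with $\widetilde{\Theta}$ the stacked vector of the $\widetilde{\theta}_j$, $j\in\overline{\mathcal{N}}_i^{k-1}$, and using $\|\widetilde{\Theta}\|^{2}=\sum_{j\in\overline{\mathcal{N}}_i^{k-1}}\|\widetilde{\theta}_j\|^{2}$. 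Either way the per-node polynomials $\rho_{T_1,j}$ absorb factors counting $|\overline{\mathcal{N}}_i^{k-1}|$ and the cross-block Hessian norms, which your degree-count argument correctly shows remain at most quadratic in $\|\kappa\|$ (the mixed blocks being of degree at most one). A secondary wrinkle: $\phi_i$ is vector-valued, so there is no single intermediate point $\bar{\theta}$ in the Lagrange form; apply the Lagrange remainder componentwise (one intermediate point per output component) or use the integral form of the remainder --- the uniform bounds over the convex set $\mho$ make either version yield the same estimate.
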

Let $\eta\triangleq[\eta_{i}^{\top}]_{i\in V}^{\top}\in\mathbb{R}^{nN}$
denote the ensemble backstepping error vector for all influencing
nodes. Define $\widetilde{\theta}\triangleq[\widetilde{\theta}_{i}^{\top}]_{i\in V}^{\top}\in\mathbb{R}^{pN}$
and $\Gamma\triangleq\text{blkdiag\{\ensuremath{\Gamma_{1}},\ensuremath{\ldots},\ensuremath{\Gamma_{N}}\}}\in\mathbb{R}^{pN\times pN}$.
Define the concatenated state vector as $z\triangleq[\begin{array}{ccc}
e^{\top} & \eta^{\top} & \widetilde{\theta}^{\top}\end{array}]^{\top}\in\mathbb{R}^{\varphi}$, where $\varphi\triangleq n+N(n+p)$. Taking the time derivative
of $z$ and substituting (\ref{eq:error-derivative-2}), (\ref{eq:update-law}),
and (\ref{eq:backstepping-derivative-4}) into the resulting expression
yields{\tiny
\begin{equation}
\dot{z}=\left[\begin{array}{c}
h\left(x_{0}\right)-\dot{x}_{d}+\sum_{i\in V}g\left(x_{0},y_{i}\right)\left(\eta_{i}+\left(1-k_{1}\right)e\right)\\
\left[\left(\begin{array}{c}
k_{1}\sum_{j\in V\backslash\mathcal{\overline{N}}_{i}}g\left(x_{0},y_{j}\right)\left(\eta_{j}+\left(1-k_{1}\right)e\right)+\varepsilon_{i}\\
-k_{2}\eta_{i}+\sum_{j\in\mathcal{\mathcal{\overline{N}}}_{i}^{k-1}}\frac{\partial\widehat{\phi}_{i}}{\partial\widehat{\theta}_{j}}\widetilde{\theta}_{i}+\chi_{i}
\end{array}\right)^{\top}\right]_{i\in V}^{\top}\\
-\left[\left(\text{proj}\left(\aleph_{i}\right)\right)^{\top}\right]_{i\in V}^{\top}
\end{array}\right].\label{eq:closed-loop-dynamics}
\end{equation}
}Consider the candidate Lyapunov function
\begin{equation}
V\left(z\right)\triangleq\frac{1}{2}z^{\top}Pz,\label{eq:lyapunov}
\end{equation}
where $P\triangleq\text{blkdiag}\{I_{n},I_{nN},\Gamma^{-1}\}\in\mathbb{R}^{\varphi\times\varphi}$
and $V:\mathbb{R}^{\varphi}\to\mathbb{R}_{\geq0}$. By the Rayleigh
quotient, (\ref{eq:lyapunov}) satisfies
\begin{equation}
\lambda_{1}\left\Vert z\right\Vert ^{2}\leq V\left(z\right)\leq\lambda_{2}\left\Vert z\right\Vert ^{2},\label{eq:rayleigh-ritz}
\end{equation}
where $\lambda_{1}\triangleq\frac{1}{2}\min\{1,\lambda_{\text{min}}(\Gamma^{-1})\}$
and $\lambda_{2}\triangleq\frac{1}{2}\max\{1,\lambda_{\text{max}}(\Gamma^{-1})\}$.
Based on the subsequent proposition, define $\lambda_{3}\in\mathbb{R}_{>0}$
as $\lambda_{3}\triangleq\min\left\{ \frac{k_{1}}{2}\overline{g}N-\overline{g}N-\frac{1}{2}N^{3}-\frac{1}{2}N-\frac{\epsilon_{1}}{2},\right.\frac{k_{2}}{4}-k_{1}\overline{g}-\frac{1}{2}k_{1}^{2}\overline{g}^{2}-\frac{1}{2}\overline{g}^{2}-\frac{1}{2}k_{1}^{4}\overline{g}^{2}\left.,\frac{k_{3}}{2}\right\} $.
\begin{proposition}
\label{proposition:gain_conditions} If the user-defined gains $\epsilon_{1}\in\mathbb{R}$,
$k_{1}\in\mathbb{R}$, $k_{2}\in\mathbb{R}$, and $k_{3}\in\mathbb{R}$
are sequentially selected to satisfy the sufficient gain conditions
$\epsilon_{1}>0$, $k_{1}>2+\frac{1}{\overline{g}N}\left(N^{3}+N+\epsilon_{1}\right)$,
$k_{2}>4k_{1}\overline{g}+2\overline{g}^{2}\left(k_{1}^{4}+k_{1}^{2}+1\right)$,
and $k_{3}>0$, respectively, then $\lambda_{3}>0$.
\end{proposition}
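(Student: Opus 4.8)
The plan is to observe that $\lambda_3$ is defined as the minimum of three scalar quantities, so $\lambda_3 > 0$ holds if and only if each of the three arguments of the $\min$ is strictly positive. Thus the proof reduces to verifying three separate inequalities, one governed by each of the gain conditions, and to confirming that the \emph{sequential} selection renders these conditions mutually compatible. Since Assumption~\ref{ass:interaction-function-bounds} guarantees $\overline{g} > 0$ and the network satisfies $N \geq 2$, all coefficients and denominators appearing below are well-defined and strictly positive.

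First I would treat the first argument of the $\min$. Fixing $\epsilon_1 > 0$, I would rearrange the inequality $\frac{k_1}{2}\overline{g}N - \overline{g}N - \frac{1}{2}N^3 - \frac{1}{2}N - \frac{\epsilon_1}{2} > 0$ to isolate $k_1$. Multiplying through by $\frac{2}{\overline{g}N} > 0$ yields exactly the stated condition $k_1 > 2 + \frac{1}{\overline{g}N}(N^3 + N + \epsilon_1)$, so this choice of $k_1$ makes the first term strictly positive. Next, with $k_1$ now fixed, I would rearrange $\frac{k_2}{4} - k_1\overline{g} - \frac{1}{2}k_1^2\overline{g}^2 - \frac{1}{2}\overline{g}^2 - \frac{1}{2}k_1^4\overline{g}^2 > 0$ by multiplying by $4$ and collecting the $\overline{g}^2$ terms, which reproduces the condition $k_2 > 4k_1\overline{g} + 2\overline{g}^2(k_1^4 + k_1^2 + 1)$ and thereby forces the second term to be positive. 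Finally, the third argument $\frac{k_3}{2}$ is strictly positive precisely when $k_3 > 0$.

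The key point, rather than a computational obstacle, is the \emph{ordering} of the gain selection, and this is where the word ``sequentially'' in the statement earns its place. The lower bound on $k_1$ depends on the previously chosen $\epsilon_1$; the lower bound on $k_2$ depends on the previously chosen $k_1$; and $k_3$ is chosen independently. Because each gain is selected only after the quantities appearing in its lower bound have been fixed, no circular dependency arises, and each lower bound is a finite real number that can always be exceeded. Hence the four conditions are simultaneously satisfiable, each of the three arguments of the $\min$ is strictly positive, and therefore $\lambda_3 > 0$.
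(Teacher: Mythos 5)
Your proof is correct and matches the paper's (implicit) argument: the paper states Proposition \ref{proposition:gain_conditions} without proof precisely because it follows from the direct algebraic rearrangement you carry out, namely that each gain condition is equivalent to strict positivity of the corresponding argument of the $\min$ defining $\lambda_{3}$ (using $\overline{g}N>0$), with the sequential selection avoiding circularity. Your verification of all three terms, including the multiplications by $\tfrac{2}{\overline{g}N}$ and by $4$, is exact, so nothing is missing.
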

The universal function approximation property of GNNs only holds over
a compact domain; therefore, we require that the inputs to $\Phi$
must lie on a compact domain for all $t\in\mathbb{R}_{\geq0}$. We
enforce this condition by proving that $R_{i}\in\mathcal{Y}_{i}$,
for all $i\in V$. Let $\rho:\mathbb{R}_{\geq0}\to\mathbb{R}_{\geq0}$
denote a strictly increasing function and define $\overline{\rho}(\cdot)\triangleq\rho(\cdot)-\rho(0)$,
where $\overline{\rho}$ is strictly increasing and invertible. Consider
the compact domain $\mathcal{D}\subset\mathbb{R}^{\varphi}$, defined
as{\small
\begin{equation}
\mathcal{D}\triangleq\left\{ \psi\in\mathbb{R}^{\varphi}:\left\Vert \psi\right\Vert \leq\overline{\rho}^{-1}\left(k_{2}\left(\lambda_{3}-\lambda_{4}\right)-\rho(0)\right)\right\} ,\label{eq:state-space}
\end{equation}
}where $\lambda_{4}\in\mathbb{R}_{>0}$ is a user-defined rate of
convergence. Define the set $\Omega\triangleq G\times\mathcal{Y}$,
where $\mathcal{Y}\triangleq\{\psi\in\mathbb{R}^{Nd^{(in)}}:\psi_{i}\in\mathcal{Y}_{i},\ \forall\ensuremath{i\in V}\}$,
and $\mathcal{Y}_{i}$ is defined as
\begin{equation}
\begin{gathered}\mathcal{Y}_{i}\triangleq\left\{ \psi\right.\in\mathbb{R}^{d^{(in)}}:\left\Vert \psi\right\Vert \leq\left(1+N\left(k_{1}+1\right)\right)\\
\cdot\overline{\rho}^{-1}\left(k_{2}\left(\lambda_{3}-\lambda_{4}\right)-\rho(0)\right)+\left(1+N\right)\overline{x}_{d}\Bigr\},
\end{gathered}
\label{eq:compact-domain}
\end{equation}
For the dynamics described by (\ref{eq:closed-loop-dynamics}), the
set of initial conditions $\mathcal{S}\subset\mathcal{D}$ is defined
as{\scriptsize
\begin{align}
\mathcal{S} & \triangleq\left\{ \psi\in\mathbb{R}^{\varphi}:\left\Vert \psi\right\Vert <\sqrt{\frac{\lambda_{1}}{\lambda_{2}}}\overline{\rho}^{-1}\left(k_{2}\left(\lambda_{3}-\lambda_{4}\right)-\rho(0)\right)-\sqrt{\frac{\upsilon}{\lambda_{4}}}\right\} ,\label{eq:stabilizing-initial-conditions}
\end{align}
}and the uniform ultimate bound is defined as
\begin{equation}
\mathcal{U}\triangleq\left\{ \psi\in\mathbb{R}^{\varphi}:\left\Vert \psi\right\Vert \leq\sqrt{\frac{\lambda_{2}\upsilon}{\lambda_{1}\lambda_{4}}}\right\} ,\label{eq:ultimate-bound-ball}
\end{equation}
where $\upsilon\in\mathbb{R}_{>0}$ is defined as $\upsilon\triangleq\frac{\overline{\varepsilon}^{2}}{k_{2}}+\frac{\overline{\dot{x}}_{d}^{2}}{2\overline{g}N}+\frac{\overline{h}^{2}}{2\epsilon_{1}}+\frac{1}{2}\left(2N+1\right)^{2}k_{3}\overline{\theta}^{2}N$.
\begin{theorem}
\label{theorem:lyapunov-stability} For the influencing nodes' dynamics
given by (\ref{eq:influencer-dynamics}) and target node's dynamics
given by (\ref{eq:target-dynamics}), the controller and adaptive
update law in (\ref{eq:controller}) and (\ref{eq:update-law}), respectively,
guarantee that for any $z(t_{0})\in\mathcal{S}$, $z$ exponentially
converges to $\mathcal{U}$ in the sense that{\small
\[
\left\Vert z(t)\right\Vert \leq\sqrt{\frac{\lambda_{2}}{\lambda_{1}}\left(\frac{\upsilon}{\lambda_{4}}+e^{-\frac{\lambda_{4}}{\lambda_{2}}\left(t-t_{0}\right)}\left(\left\Vert z\left(t_{0}\right)\right\Vert ^{2}-\frac{\upsilon}{\lambda_{4}}\right)\right)},
\]
}for all $t\in[t_{0},\infty)$, provided that Proposition \ref{proposition:gain_conditions}
is satisfied, the sufficient gain condition $\lambda_{3}>\lambda_{4}+\frac{1}{k_{2}}\rho\left(\sqrt{\frac{\lambda_{2}\upsilon}{\lambda_{1}\lambda_{4}}}\right)$
is satisfied, and Assumptions \ref{ass:interaction-function-bounds}-\ref{ass:convex-loss-function}
hold.
\end{theorem}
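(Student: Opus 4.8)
The plan is to establish a differential inequality of the form $\dot{V} \leq -\lambda_4\|z\|^2 + \upsilon$ on the compact domain $\mathcal{D}$ and then integrate it via the comparison lemma to recover the stated exponential bound. First I would differentiate (\ref{eq:lyapunov}) along (\ref{eq:closed-loop-dynamics}). Since $P = \text{blkdiag}\{I_n, I_{nN}, \Gamma^{-1}\}$ and $\dot{\widetilde{\theta}}_i = -\dot{\widehat{\theta}}_i = -\text{proj}(\aleph_i)$, this gives $\dot{V} = e^\top\dot{e} + \sum_{i\in V}\eta_i^\top\dot{\eta}_i - \sum_{i\in V}\widetilde{\theta}_i^\top\Gamma_i^{-1}\text{proj}(\aleph_i)$. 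I would substitute (\ref{eq:error-derivative-2}) for $\dot{e}$ and (\ref{eq:backstepping-derivative-4}) for $\dot{\eta}_i$, and invoke the standard projection property \cite[Appendix E.4]{Krstic.Kanellakopoulos.ea1995} to replace $-\widetilde{\theta}_i^\top\Gamma_i^{-1}\text{proj}(\aleph_i)$ by its upper bound $-\widetilde{\theta}_i^\top\Gamma_i^{-1}\aleph_i$, which simultaneously keeps $\widehat{\theta}\in\mho$ for all time.

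The central cancellation built into the design is that the Jacobian feedback term $\sum_{i}\eta_i^\top(\sum_{j\in\overline{\mathcal{N}}_i^{k-1}}\frac{\partial\widehat{\phi}_i}{\partial\widehat{\theta}_j})\widetilde{\theta}_i$ arising from $\dot{\eta}$ is exactly matched and removed by the term $-\widetilde{\theta}_i^\top(\sum_{j\in\overline{\mathcal{N}}_i^{k-1}}\frac{\partial\widehat{\phi}_i}{\partial\widehat{\theta}_j}^\top)\eta_i$ contributed by $\aleph_i$ in (\ref{eq:update-law}). After this cancellation, the remaining terms are bounded group by group: the drift $h$ and trajectory derivative $\dot{x}_d$ via Assumptions \ref{ass:interaction-function-bounds} and \ref{ass:goal-location-bounds}; the interaction sums $g(x_0,y_j)(\eta_j + (1-k_1)e)$ via $\|g\|\le\overline{g}$ together with Young's inequality to split them between the $\|e\|^2$ and $\|\eta\|^2$ channels; the reconstruction error via $\|\varepsilon\|\le\overline{\varepsilon}$; and the consensus regularization $k_3\widetilde{\theta}_i^\top(\sum_{j\in\mathcal{N}_i}(\widehat{\theta}_i - \widehat{\theta}_j) + \widehat{\theta}_i)$ by substituting $\widehat{\theta} = \theta^\ast - \widetilde{\theta}$, which yields a negative definite term in $\widetilde{\theta}$ plus a constant proportional to $\overline{\theta}^2$. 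The coefficient bookkeeping must allocate fractions of $k_2\|\eta\|^2$ to dominate each $\eta$ cross term, producing precisely the three entries of $\lambda_3$ and the constant aggregate $\upsilon$.

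The state-dependent residual $\chi_i$ is the term that requires the function $\rho$: its Lagrange-remainder part is controlled by Lemma \ref{lemma:lagrange-bounds} as a quadratic-in-$\|\kappa\|$ multiple of $\|\widetilde{\theta}_i\|^2$, and together with the ideal-weight-difference part it is absorbed through a $k_2$-weighted Young's inequality into a term of the form $\frac{1}{k_2}\rho(\|z\|)\|z\|^2$. Collecting everything gives $\dot{V} \le -(\lambda_3 - \frac{1}{k_2}\rho(\|z\|))\|z\|^2 + \upsilon$, and Proposition \ref{proposition:gain_conditions} guarantees $\lambda_3>0$. Restricting to $\mathcal{D}$, the definition (\ref{eq:state-space}) forces $\rho(\|z\|)\le k_2(\lambda_3-\lambda_4)$, hence $\dot{V}\le -\lambda_4\|z\|^2 + \upsilon \le -\frac{\lambda_4}{\lambda_2}V + \upsilon$ by (\ref{eq:rayleigh-ritz}); the comparison lemma then delivers the displayed bound after applying $\lambda_1\|z\|^2 \le V \le \lambda_2\|z\|^2$ once more.

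The step I expect to be the main obstacle is not the algebra but establishing \emph{forward invariance}, so that the compact-domain estimates above---in particular the universal approximation property and the bound $\|\varepsilon\|\le\overline{\varepsilon}$---remain valid for all $t$. The inequality $\dot{V}\le -\lambda_4\|z\|^2 + \upsilon$ holds only while $z\in\mathcal{D}$ and the GNN inputs satisfy $R_i\in\mathcal{Y}_i$. I would close this loop with a continuity/contradiction argument: the set $\mathcal{S}$ in (\ref{eq:stabilizing-initial-conditions}) is sized so that any $z(t_0)\in\mathcal{S}$ has $V(t_0)$ small enough that the sublevel-set flow can never reach $\partial\mathcal{D}$, and I would separately verify, using (\ref{eq:tracking-error}), (\ref{eq:backstepping-error}), (\ref{eq:influence-strategy}), and Assumption \ref{ass:goal-location-bounds}, that the resulting bound on $\|z\|$ implies $R_i\in\mathcal{Y}_i$ as defined in (\ref{eq:compact-domain}). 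This confines the trajectory to the region where the approximation and all preceding bounds are legitimate, completing the argument.
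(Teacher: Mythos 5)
Your proposal is correct and follows essentially the same route as the paper's own proof: the same projection bound, the same Jacobian-term cancellation between $\dot{\eta}_i$ and $\aleph_i$, the same absorption of $\chi$ via Lemma \ref{lemma:lagrange-bounds} and a $k_2$-weighted completion of squares into $\frac{1}{k_2}\rho\left(\left\Vert z\right\Vert \right)\left\Vert z\right\Vert ^{2}$, and the same comparison-lemma integration. Your closing forward-invariance step also mirrors the paper, which restricts to an interval $\mathcal{I}$ where $z\in\mathcal{D}$, uses $z(t_{0})\in\mathcal{S}$ and local Lipschitz continuity of the projected dynamics to extend $\mathcal{I}$ to $[t_{0},\infty)$, and then verifies $R_{i}\in\mathcal{Y}_{i}$ from the resulting bound on $\left\Vert z\right\Vert $ so the universal approximation property remains valid throughout.
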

\begin{proof}
Taking the time derivative of (\ref{eq:lyapunov}) and substituting
(\ref{eq:closed-loop-dynamics}) into the resulting expression yields{\small
\begin{equation}
\begin{aligned}\dot{V} & =-\left(k_{1}-1\right)\sum_{i\in V}g\left(x_{0},y_{i}\right)e^{\top}e+e^{\top}h\left(x_{0}\right)\\
 & -k_{2}\eta^{\top}\eta-e^{\top}\dot{x}_{d}+\sum_{i\in V}g\left(x_{0},y_{i}\right)e^{\top}\eta_{i}\\
 & +k_{1}\sum_{i\in V}\sum_{j\in V\backslash\mathcal{\overline{N}}_{i}}g\left(x_{0},y_{j}\right)\eta_{i}^{\top}\eta_{j}\\
 & +k_{1}\left(1-k_{1}\right)\sum_{i\in V}\sum_{j\in V\backslash\mathcal{\overline{N}}_{i}}g\left(x_{0},y_{j}\right)\eta_{i}^{\top}e+\eta^{\top}\varepsilon\\
 & +\sum_{i\in V}\sum_{j\in\mathcal{\mathcal{\overline{N}}}_{i}^{k-1}}\eta_{i}^{\top}\frac{\partial\widehat{\phi}_{i}}{\partial\widehat{\theta}_{j}}\widetilde{\theta}_{i}+\eta^{\top}\chi\\
 & -\sum_{i\in V}\widetilde{\theta}_{i}^{\top}\Gamma_{i}^{-1}\text{proj}\left(\aleph_{i}\right),
\end{aligned}
\label{eq:lyapunov-derivative-1}
\end{equation}
}where $\chi\triangleq[\chi_{i}^{\top}]_{i\in V}^{\top}\in\mathbb{R}^{nN}$.
From \cite[Appendix E.4]{Krstic.Kanellakopoulos.ea1995}, we have
that $\text{proj}\left(\aleph_{i}\right)\leq\aleph_{i}$. Using this
fact, performing algebraic manipulation, and applying (\ref{eq:parameter-estimation-error})
yields{\small
\begin{equation}
\begin{aligned}-\sum_{i\in V}\widetilde{\theta}_{i}^{\top}\Gamma_{i}^{-1}\text{proj}\left(\aleph_{i}\right) & \leq-\sum_{i\in V}\sum_{j\in\mathcal{\overline{N}}_{i}^{k-1}}\left(\widetilde{\theta}_{i}^{\top}\frac{\partial\widehat{\phi}_{i}}{\partial\widehat{\theta}_{j}}^{\top}\eta_{i}\right)\\
 & +k_{3}\sum_{i\in V}\widetilde{\theta}_{i}^{\top}\widehat{\theta}_{i}\\
 & +k_{3}\sum_{i\in V}\sum_{j\in\mathcal{N}_{i}}\widetilde{\theta}_{i}^{\top}\left(\widetilde{\theta}_{j}-\widetilde{\theta}_{i}\right)\\
 & +k_{3}\sum_{i\in V}\sum_{j\in\mathcal{N}_{i}}\widetilde{\theta}_{i}^{\top}\left(\theta_{i}^{\ast}-\theta_{j}^{\ast}\right).
\end{aligned}
\label{eq:update-law-bounding-1}
\end{equation}
}Using (\ref{eq:theta_star}), the definition of the graph Laplacian,
and expanding $\widehat{\theta}_{i}$ according to (\ref{eq:parameter-estimation-error}),
(\ref{eq:update-law-bounding-1}) is upper bounded as

{\small
\begin{equation}
\begin{aligned}-\sum_{i\in V}\widetilde{\theta}_{i}^{\top}\Gamma_{i}^{-1}\text{proj}\left(\aleph_{i}\right) & \leq-\sum_{i\in V}\sum_{j\in\mathcal{\overline{N}}_{i}^{k-1}}\left(\widetilde{\theta}_{i}^{\top}\frac{\partial\widehat{\phi}_{i}}{\partial\widehat{\theta}_{j}}^{\top}\eta_{i}\right)\\
 & +k_{3}\sum_{i\in V}\widetilde{\theta}_{i}^{\top}\overline{\theta}-k_{3}\widetilde{\theta}^{\top}\widetilde{\theta}\\
 & +k_{3}\sum_{i\in V}\sum_{j\in\mathcal{N}_{i}}\widetilde{\theta}_{i}^{\top}\left(\theta_{i}^{\ast}-\theta_{j}^{\ast}\right).
\end{aligned}
\label{eq:update-law-bounding-2}
\end{equation}
}Using (\ref{eq:theta_star}), substituting (\ref{eq:update-law-bounding-2})
into (\ref{eq:lyapunov-derivative-1}), upper bounding all terms by
their norms, and applying the Cauchy-Schwarz inequality yields
\begin{equation}
\begin{aligned}\dot{V} & \leq-\left(k_{1}-1\right)\overline{g}N\left\Vert e\right\Vert ^{2}-\left(k_{2}-k_{1}\overline{g}\right)\left\Vert \eta\right\Vert ^{2}\\
 & -k_{3}\left\Vert \widetilde{\theta}\right\Vert ^{2}+\left\Vert e\right\Vert \overline{\dot{x}}_{d}+\overline{g}\sqrt{N}\left\Vert \eta\right\Vert \left\Vert e\right\Vert \\
 & +k_{1}\left(k_{1}+1\right)\overline{g}N\sqrt{N}\left\Vert \eta\right\Vert \left\Vert e\right\Vert \\
 & +\left\Vert e\right\Vert \overline{h}+\left\Vert \eta\right\Vert \left\Vert \chi\right\Vert +\left\Vert \eta\right\Vert \overline{\varepsilon}\\
 & +\left(2N+1\right)k_{3}\overline{\theta}\sqrt{N}\left\Vert \widetilde{\theta}\right\Vert .
\end{aligned}
\label{eq:lyapunov-derivative-2}
\end{equation}
By Lemma \ref{lemma:lagrange-bounds} and the use of a bounded search
space for $\theta^{\ast}$ in (\ref{eq:theta_star}), there exist
some $\rho_{G_{1},j}:\mathbb{R}_{\geq0}\to\mathbb{R}_{\geq0}$ and
$\rho_{T_{1},j}:\mathbb{R}_{\geq0}\to\mathbb{R}_{\geq0}$ such that
$\rho_{G_{1},j}$ and $\rho_{T_{1},j}$ are strictly increasing functions,
where $\|\chi_{i}\|\leq2\overline{\theta}\sum_{j\in\mathcal{\mathcal{N}}_{i}^{k-1}}\rho_{G_{1},j}(\|R\|)+\sum_{j\in\mathcal{\mathcal{\overline{N}}}_{i}^{k-1}}\rho_{T_{1},j}(\|R\|)\|\widetilde{\theta}_{j}\|^{2}$,
for all $i\in V$. From (\ref{eq:theta_star}) and the definition
of $R$, there exist some $\rho_{1,i}:\mathbb{R}_{\geq0}\to\mathbb{R}_{\geq0}$
where $\rho_{1,i}$ are strictly increasing functions such that $\|\chi_{i}\|\leq\rho_{1,i}(\|z\|)\|z\|$,
for all $i\in V$. Using the Cauchy-Schwarz inequality, it follows
that $\|\eta\|\|\chi\|\leq\|\eta\|\rho_{2}(\|z\|)\|z\|$, where $\rho_{2}:\mathbb{R}_{\geq0}\to\mathbb{R}_{\geq0}$
is a strictly increasing function. By completing the square, we upper
bound $\lVert\eta\rVert\lVert\chi\rVert$ as
\begin{equation}
\left\Vert \eta\right\Vert \left\Vert \chi\right\Vert \leq\frac{1}{k_{2}}\rho\left(\left\Vert z\right\Vert \right)\left\Vert z\right\Vert ^{2}+\frac{k_{2}}{2}\left\Vert \eta\right\Vert ^{2},\label{eq:nonlinear-damping}
\end{equation}
where $\rho:\mathbb{R}_{\geq0}\to\mathbb{R}_{\geq0}$ is a strictly
increasing function that satisfies $\frac{1}{2}\rho_{2}^{2}\left(\left\Vert z\right\Vert \right)\leq\rho\left(\left\Vert z\right\Vert \right)$.
Substituting (\ref{eq:nonlinear-damping}) into (\ref{eq:lyapunov-derivative-2}),
applying Assumptions \ref{ass:interaction-function-bounds} and \ref{ass:goal-location-bounds},
and applying Young's inequality yields{\small
\begin{equation}
\begin{aligned}\dot{V} & \leq-\left(k_{1}\overline{g}N-\overline{g}N-\frac{1}{2}N^{3}-\frac{1}{2}N\right)\left\Vert e\right\Vert ^{2}\\
 & -\left(\frac{k_{2}}{2}-k_{1}\overline{g}-\frac{1}{2}k_{1}^{2}\overline{g}^{2}-\frac{1}{2}\overline{g}^{2}-\frac{1}{2}k_{1}^{4}\overline{g}^{2}\right)\left\Vert \eta\right\Vert ^{2}\\
 & -k_{3}\left\Vert \widetilde{\theta}\right\Vert ^{2}+\frac{1}{k_{2}}\rho\left(\left\Vert z\right\Vert \right)\left\Vert z\right\Vert ^{2}+\left\Vert e\right\Vert \overline{\dot{x}}_{d}\\
 & +\left\Vert \eta\right\Vert \overline{\varepsilon}+\left(2N+1\right)k_{3}\overline{\theta}\sqrt{N}\left\Vert \widetilde{\theta}\right\Vert +\left\Vert e\right\Vert \overline{h}.
\end{aligned}
\label{eq:lyapunov-derivative-3}
\end{equation}
}Completing the square and applying the triangle inequality to (\ref{eq:lyapunov-derivative-3})
yields
\begin{equation}
\begin{aligned}\dot{V} & \leq-\left(\frac{k_{1}}{2}\overline{g}N-\overline{g}N-\frac{1}{2}N^{3}-\frac{1}{2}N-\frac{\epsilon_{1}}{2}\right)\left\Vert e\right\Vert ^{2}\\
 & -\left(\frac{k_{2}}{4}-k_{1}\overline{g}-\frac{1}{2}k_{1}^{2}\overline{g}^{2}-\frac{1}{2}\overline{g}^{2}-\frac{1}{2}k_{1}^{4}\overline{g}^{2}\right)\left\Vert \eta\right\Vert ^{2}\\
 & -\frac{k_{3}}{2}\left\Vert \widetilde{\theta}\right\Vert ^{2}+\Biggl(\frac{\overline{\varepsilon}^{2}}{k_{2}}+\frac{\overline{h}^{2}}{2\epsilon_{1}}+\frac{\overline{\dot{x}}_{d}^{2}}{2\overline{g}N}\\
 & +\frac{1}{2}\left(2N+1\right)^{2}k_{3}\overline{\theta}^{2}N\Biggr)+\frac{1}{k_{2}}\rho\left(\left\Vert z\right\Vert \right)\left\Vert z\right\Vert ^{2}.
\end{aligned}
\label{eq:lyapunov-derivative-4}
\end{equation}
Substituting the definitions of $\lambda_{3}$ and $\upsilon$ into
(\ref{eq:lyapunov-derivative-4}) gives
\begin{equation}
\dot{V}\leq-\text{\ensuremath{\left(\lambda_{3}-\frac{1}{k_{2}}\ensuremath{\rho\left(\left\Vert z\right\Vert \right)}\right)}}\left\Vert z\right\Vert ^{2}+\upsilon,\label{eq:stability-4}
\end{equation}
for all $t\in\mathcal{I}$. Since the solution $t\mapsto z$$(t)$
is continuous, there exists a time interval $\mathcal{I}\triangleq[t_{0},t_{1}]$
with $t_{1}>t_{0}$ such that $z$$\in\mathcal{D}$ for all $t\in\mathcal{I}$.
Since $z(t_{0})\in\mathcal{S}$ and $\rho$ is a strictly increasing
function, (\ref{eq:stability-4}) can be written as
\begin{equation}
\dot{V}\leq-\frac{\lambda_{4}}{\lambda_{2}}V\left(z\right)+\upsilon,\label{eq:lyapunov-derivative-5}
\end{equation}
for all $t\in\mathcal{I}$. Solving the differential inequality in
(\ref{eq:lyapunov-derivative-5}) over $\mathcal{I}$ yields{\footnotesize
\begin{align}
V\left(z\left(t\right)\right) & \leq V\left(z\left(t_{0}\right)\right)e^{-\frac{\lambda_{4}}{\lambda_{2}}\left(t-t_{0}\right)}+\frac{\lambda_{2}\upsilon}{\lambda_{4}}\left(1-e^{-\frac{\lambda_{4}}{\lambda_{2}}\left(t-t_{0}\right)}\right).\label{eq:lyapunov-solution}
\end{align}
}Applying (\ref{eq:rayleigh-ritz}) to (\ref{eq:lyapunov-solution})
yields{\small
\begin{equation}
\left\Vert z\left(t\right)\right\Vert \leq\sqrt{e^{-\frac{\lambda_{4}}{\lambda_{2}}\left(t-t_{0}\right)}\left(\frac{\lambda_{2}}{\lambda_{1}}\left\Vert z\left(t_{0}\right)\right\Vert ^{2}-\frac{\lambda_{2}\upsilon}{\lambda_{1}\lambda_{4}}\right)+\frac{\lambda_{2}\upsilon}{\lambda_{1}\lambda_{4}}},\label{eq:state-traj-final-sol}
\end{equation}
}for all $t\in\mathcal{I}$. Next, we must show that $\mathcal{I}$
can be extended to $[t_{0},\infty).$ Let $t\mapsto z(t)$ be a solution
to the ordinary differential equation in (\ref{eq:closed-loop-dynamics})
with initial condition $z(t_{0})\in\mathcal{S}$, as defined in (\ref{eq:stabilizing-initial-conditions}).
By \cite[Lemma E.1]{Krstic.Kanellakopoulos.ea1995}, the projection
operator is locally Lipschitz in its arguments. Then, the right hand
side of (\ref{eq:closed-loop-dynamics}) is piecewise continuous in
$t$ and locally Lipschitz in $z$ for $t\geq t_{0}$ where $z\in\mathbb{R}^{\varphi}$.
By (\ref{eq:state-traj-final-sol}), it follows that
\begin{equation}
\left\Vert z\left(t\right)\right\Vert <\sqrt{\frac{\lambda_{1}}{\lambda_{2}}}\left\Vert z\left(t_{0}\right)\right\Vert +\sqrt{\frac{\lambda_{2}\upsilon}{\lambda_{1}\lambda_{4}}},\label{eq:size-of-D}
\end{equation}
for all $t\in\mathcal{I}$. Since $z(t_{0})\in\mathcal{S}$ for all
$t\in\mathcal{I}$, we have
\begin{equation}
\left\Vert z\left(t_{0}\right)\right\Vert <\sqrt{\frac{\lambda_{1}}{\lambda_{2}}}\overline{\rho}^{-1}\left(k_{2}\left(\lambda_{3}-\lambda_{4}\right)-\rho(0)\right)+\sqrt{\frac{\upsilon}{\lambda_{4}}}.\label{eq:size-of-S}
\end{equation}
Applying (\ref{eq:size-of-S}) to (\ref{eq:size-of-D}) gives $\|z(t_{0})\|\leq\overline{\rho}^{-1}\left(k_{2}\left(\lambda_{3}-\lambda_{4}\right)-\rho(0)\right)$
for all $t\in\mathcal{I}$, which by the definition of (\ref{eq:state-space}),
yields $z(t)\in\mathcal{D}$ for all $t\in\mathcal{I}$. Since $z(t)$
remains in the compact set $\mathcal{D}\subset\mathbb{R}^{\varphi}$
for all $t\in\mathcal{I}$, a unique solution $z(t)$ exists for all
$t\geq t_{0}$ \cite[Theorem 3.3]{Khalil2002}. Therefore, $\mathcal{I}=[t_{0},\infty)$.
Thus, for $z(t_{0})\in\mathcal{S}$, (\ref{eq:state-traj-final-sol})
holds for all $z(t)\in\mathcal{D}$ for all $t\in\mathcal{I}$. The
limit of (\ref{eq:state-traj-final-sol}) as $t\to\infty$ yields
$\|z\|\leq\sqrt{\frac{\lambda_{2}\upsilon}{\lambda_{1}\lambda_{4}}}$,
which indicates that $z(t)$ converges to the set $\mathcal{U}$,
as defined in (\ref{eq:ultimate-bound-ball}).

We must prove that $R_{i}\in\mathcal{Y}_{i}$ for all $i\in V$. Using
(\ref{eq:tracking-error}), Assumption \ref{ass:goal-location-bounds},
and the triangle inequality yields $\|x_{0}\|\leq\overline{x}_{d}+\|e\|$.
Similarly, using (\ref{eq:backstepping-error}), (\ref{eq:influence-strategy}),
and the triangle inequality yields $\left\Vert y_{i}\right\Vert \leq k_{1}\left\Vert e\right\Vert +\overline{x}_{d}+\left\Vert \eta\right\Vert $,
for all $i\in V$. By the definitions of $Q_{i}$ and $z$, it follows
that
\begin{equation}
\left\Vert R_{i}\right\Vert \leq\left(1+N\left(k_{1}+1\right)\right)\left\Vert z\right\Vert +\left(1+N\right)\overline{x}_{d},\label{eq:compact-domain-size}
\end{equation}
for all $i\in V$. Therefore, $G\times R\in\Omega$ for all $t\in\mathcal{I}$
and the universal function approximation property of GNNs described
in \cite[Lemma 3]{arxivFallin.Nino.ea2025} holds for all $t\in\mathcal{I}$.

Since $z\in\mathcal{D}$ for all $t\in[t_{0},\infty)$ it follows
that from (\ref{eq:compact-domain}) and (\ref{eq:compact-domain-size})
that $R_{i}\in\mathcal{Y}_{i}$ for all $i\in V$. Therefore, $z\in D$
implies $G\times R\in\Omega$, and the universal function approximation
property for GNNs holds everywhere.

Since $\|z\|\leq\overline{\rho}^{-1}\left(k_{2}\left(\lambda_{3}-\lambda_{4}\right)-\rho(0)\right)$,
$\|e\|,\|\eta\|,\|\widetilde{\theta}\|$ are bounded. Therefore, $e,\eta,\widetilde{\theta}\in\mathcal{L}_{\infty}$.
By Assumption \ref{ass:goal-location-bounds}, $x_{d}\in\mathcal{L}_{\infty}$
and since $e,\eta\in\mathcal{L}_{\infty}$, then $R\in\mathcal{L}_{\infty}$.
Based on the fact that $\widetilde{\theta}\in\mathcal{L}_{\infty}$,
the use of the projection operator in (\ref{eq:update-law}), and
the use of a bounded search space in (\ref{eq:theta_star}), we have
$\widehat{\theta}\in\mathcal{L}_{\infty}$ where $\widehat{\theta}\triangleq[\widehat{\theta}_{i}^{\top}]_{i\in V}^{\top}\in\mathbb{R}^{pN}$.
Let $\widehat{\Phi}\triangleq[\widehat{\phi}_{i}^{\top}]^{\top}\in\mathbb{R}^{nN}$.
Based on the use of activation functions with bounded derivatives
and the fact that $R,\widehat{\theta}\in\mathcal{L}_{\infty}$, the
norm of the derivative of the GNN $\hat{\Phi}$ with respect to the
weight estimates and the norm of the GNN $\hat{\Phi}$ are bounded.
Let $u\triangleq[u_{i}^{\top}]_{i\in V}^{\top}\in\mathbb{R}^{nN}$.
By Assumption \ref{ass:goal-location-bounds} and since $\eta,\widehat{\theta}\in\mathcal{L}_{\infty}$,
the norm of derivative of the GNN $\hat{\Phi}$ with respect to the
weight estimates is bounded, and the norm of the GNN $\hat{\Phi}$
is bounded, $u\in\mathcal{L}_{\infty}$. Since $\eta,\widehat{\theta}\in\mathcal{L}_{\infty}$
and that the norm of derivative of the GNN $\hat{\Phi}$ with respect
to the weight estimates is bounded, $\dot{\widehat{\theta}}\in\mathcal{L}_{\infty}$.
\end{proof}

\section{Simulation}

Using the control law in (\ref{eq:controller}) and the adaptive weight
update law in (\ref{eq:update-law}), a simulation is performed where
a network of $N=4$ nodes with a fully connected communication graph
indirectly influences a target node to follow a desired trajectory,
given by
\[
x_{d}=\left[\begin{array}{c}
10\sin\left(0.01t\right)\\
10\sin\left(0.025t\right)\cos\left(0.025t\right)\\
5\sin\left(0.075t\right)
\end{array}\right]\ m.
\]
The target node's interaction dynamics were selected as $g(x_{0},y_{i})=0.1\ \text{exp}(-\frac{1}{1,000,000}(x_{0}-y_{i})^{\top}(x_{0}-y_{i}))\ s^{-1}$.
\cite{Licitra.Bell.ea2019}. The inter-agent interaction dynamics
were selected as $f(Q_{i})=50\sum_{j\in\mathcal{N}_{i}}\frac{y_{i}-y_{j}}{\|y_{i}-y_{j}\|^{3}}\ m/s$
\cite{Sebastian.Montijano.ea2022}. The target drift dynamics were
selected such that $h(x_{0})=[-0.057\cos(x_{0,1}),0.03\sin(x_{0,2}),-0.008\cos(x_{0,3})]^{\top}\ m/s$.
These dynamics enable a repulsive effect between nodes. The initial
positions of the influencing nodes were selected as $y_{1}(t_{0})=[\begin{array}{ccc}
-6 & -1 & 8\end{array}]^{\top}\ m$, $y_{2}(t_{0})=[\begin{array}{ccc}
6 & 4 & -2\end{array}]^{\top}\ m$, $y_{3}(t_{0})=[\begin{array}{ccc}
4 & -6 & 1\end{array}]^{\top}\ m$, and $y_{4}(t_{0})=[\begin{array}{ccc}
-4, & 6, & -2\end{array}]^{\top}\ m$. The target node is initialized at $x_{0}(t_{0})=[\begin{array}{ccc}
6 & -4 & 2\end{array}]^{\top}\ m$. For all $j\in\{0,\ldots,k\}$, the Lb-GNN weight estimates were
initialized with random values drawn from a uniform distribution $W_{i}^{(j)}\sim U(0,0.3)$,
for all $i\in V$. The size of the parameter search space was selected
as $\overline{\theta}=10$. Due to the message-passing framework of
the GNN, the number of required successive communications between
neighbors is equivalent to the number of message-passing layers in
the network. In the multi-agent control literature, the number of
GNN layers is typically limited to between 1 and 4 layers\cite{Gama.Tolstaya.ea2021,Li.Gama.ea2020}.
The Lb-GNN $\Phi$ has two hidden layers ($k=2$) and eight neurons
per hidden layer, for all $i\in V$. The swish activation function
was used for all hidden layers, and $\text{\text{tanh}(\ensuremath{\cdot})}$
was used on the output layer. For all $i\in V$, the gain values were
empirically selected as $k_{1}=3.5$, $k_{2}=12$, $k_{3}=0.001$,
and $\Gamma_{i}=2\cdot I_{p\times p}$. The simulation was run for
360 seconds. A three-dimensional visualization of the trajectories
of the influencing and target nodes is presented in Figure \ref{fig:trajectories}.

\begin{figure}
\centering{}\includegraphics[width=1\columnwidth]{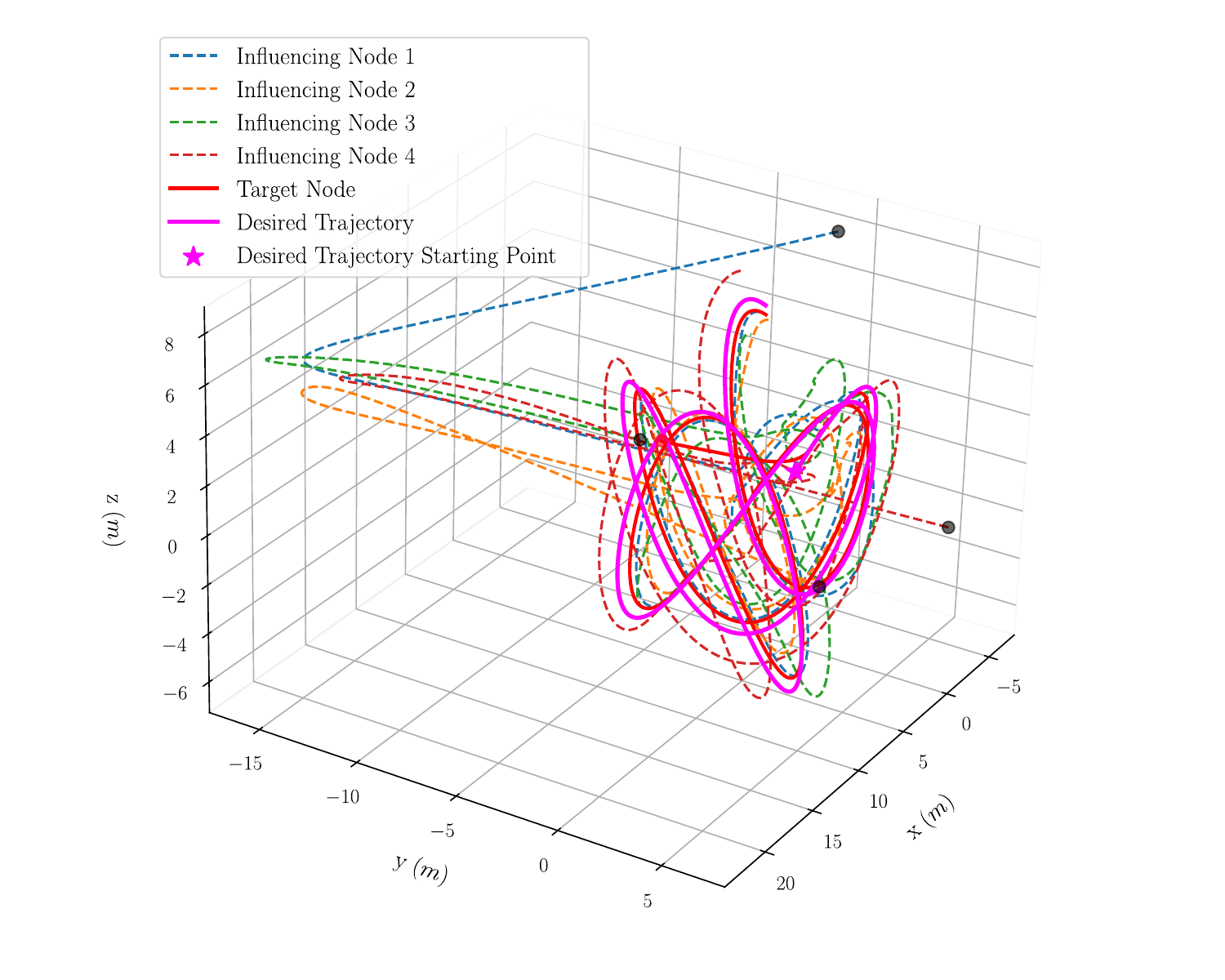}

\caption{\label{fig:trajectories} Visualization of the trajectories of the
target node and the influencing nodes for $t\in[0,360]\ s$.}
\end{figure}
The simulation resulted in an RMS position tracking error of $\lVert e\rVert_{\text{RMS}}=0.61\ m$,
a mean RMS control effort of $\lVert u\rVert_{\text{RMS}}=19.57\ m/s$,
and a mean RMS function approximation error of $\lVert\widetilde{\Phi}\rVert_{\text{RMS}}=15.95\ m/s$,
where $\widetilde{\Phi}\triangleq\Phi(R,\theta)-H(R)$ denotes the
function approximation error for the GNN $\Phi$.
\begin{figure}
\centering{}\includegraphics[width=1\columnwidth]{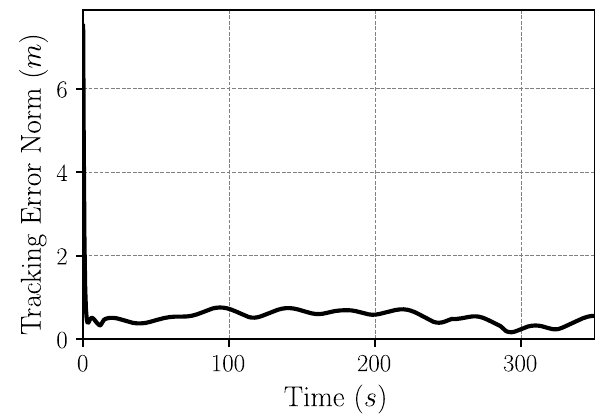}

\caption{\label{fig:tracking_error} Visualization of the position tracking
error of the target node for $t\in[0,360]\ s$.}
\end{figure}

Figure \ref{fig:tracking_error} shows the ability of the designed
control law to allow multiple nodes to influence a target node to
within a neighborhood of a desired trajectory. As seen in Figure \ref{fig:tracking_error},
the norm of the tracking error converges to within an ultimate bound
of approximately $1\ m$ after 5 seconds.

\section{Conclusion}

In this work, we developed the first Lb-GNN-based control law for
the indirect influence problem. Our method enables a team of cooperative
nodes to regulate a target node to within a neighborhood of a desired
trajectory despite uncertain inter-agent interaction dynamics. Numerical
simulations validate the theoretical findings. Future work will examine
influence strategies with formation control in the presence of multiple
targets.

\bibliographystyle{IEEEtran}
\bibliography{bib/encr,bib/master,bib/ncr}

\end{document}